\newtheorem{thm}{Theorem}
\newtheorem{definition}{Definition}
\DeclareRobustCommand{\hr3}{{$\mathcal{HR}^3$}}
\begin{document}

\title{ROCKER -- A Refinement Operator for Key Discovery} % 
\numberofauthors{3}

\author{
% 1st. author
\alignauthor
Tommaso Soru\\ %\titlenote{}\\
       \affaddr{Institute of Computer Science,\\University of Leipzig}\\
       \email{tsoru@informatik.uni-leipzig.de}
\alignauthor
Edgard Marx\\ %\titlenote{}\\
       \affaddr{Institute of Computer Science,\\University of Leipzig}\\
       \email{marx@informatik.uni-leipzig.de}
\alignauthor
\mbox{Axel-Cyrille Ngonga Ngomo}\\
       \affaddr{Institute of Computer Science,\\University of Leipzig}\\
       \email{ngonga@informatik.uni-leipzig.de}
}

\maketitle

\begin{abstract}
The Linked Data principles provide a decentral approach for publishing structured data in the RDF format on the Web. 
In contrast to structured data published in relational databases where a key is often provided explicitly, finding a set of properties that allows identifying a resource uniquely is a non-trivial task. 
Still, finding keys is of central importance for manifold applications such as resource deduplication, link discovery, logical data compression and data integration. 
In this paper, we address this research gap by specifying a refinement operator, dubbed ROCKER, which we prove to be finite, proper and non-redundant. 
We combine the theoretical characteristics of this operator with two monotonicities of keys to obtain a time-efficient approach for detecting keys, i.e., sets of properties that describe resources uniquely. 
We then utilize a hash index to compute the discriminability score efficiently. 
Therewith, we ensure that our approach can scale to very large knowledge bases. 
Results show that ROCKER yields more accurate results, has a comparable runtime, and consumes less memory w.r.t. existing state-of-the-art techniques.
\end{abstract}

% A category with the (minimum) three required fields
\category{H.4.m}{Information Systems Applications}{Miscellaneous}
%\category{M.0}{Knowledge Management}{Knowledge Acquisition}
\category{I.2.8}{Computing Methodologies}{Problem Solving, Control Methods, and Search}

%\terms{Theory}

\keywords{
Semantic Web; Linked Data; link discovery; key discovery; refinement operators
} % NOT required for Proceedings

\section{Introduction}
The number of facts published in the Linked Data Web has grown considerably over the last years~\cite{AUE+11}.
In particular, large knowledge bases such as LinkedTCGA~\cite{saleem2013linked} and LinkedGeoData~\cite{stadler2012linkedgeodata} encompass more than 20 billion triples each. 
The architectural principles behind the Linked Data Web are akin to those on the Web.
In particular, the decentral data publication process leads to facts on the same real-world entities being published across manifold knowledge bases.
For example, information on \textit{Austin, Texas} is distributed across several knowledge bases, including DBpedia\footnote{\url{http://dbpedia.org}}, LinkedGeoData and GeoNames\footnote{\url{http://www.geonames.org/}}.
Given the size of the current Linked Data datasets, providing unique means to characterize resources within existing datasets would facilitate the use of these knowledge bases, for example within the context of entity search, data integration, linked data compression and link discovery~\cite{pernelle2013automatic}.
Especially for the link discovery task, being provided with unique descriptions of resources in a knowledge base would allow for the more time-efficient computation of property matchings for link specifications, a task that has been pointed out to be particularly tedious in previous work~\cite{cheatham2013string}.
\tikzset{global scale/.style={
    scale=#1,
    every node/.style={scale=#1}
  }
}

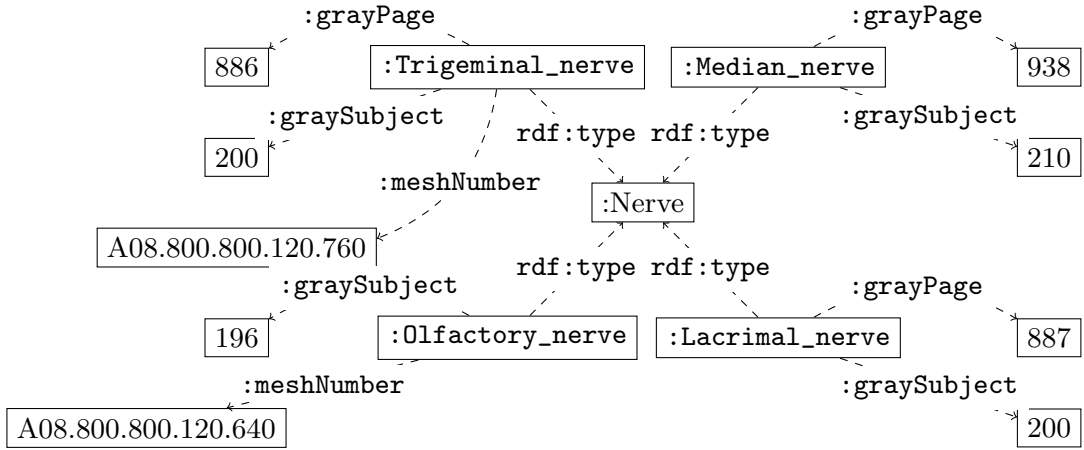
\begin{figure*}[htb]
\centering
\begin{tikzpicture}[scale=1.2, every node/.style={scale=1.2}]
%Graph 1
\node (class) at (5.5,3.5) [rectangle,draw]{:Nerve};

\node (jd1)  at (4,5) [rectangle,draw]{\texttt{:Trigeminal\_nerve}};
\node (jd1label) at (1,5) [rectangle,draw]{886};
\node (jd1Ssn) at (1,4) [rectangle,draw]{200};
\node (jd1Id) at (1,3) [rectangle,draw]{A08.800.800.120.760};

\draw[->, dashed, bend right] (jd1) to node[fill=white]{\texttt{:grayPage}} node[auto, swap] { } (jd1label);
\draw[->, dashed] (jd1) to node[fill=white]{\texttt{:graySubject}} node[auto, swap] { } (jd1Ssn);
\draw[->, dashed, bend left] (jd1) to node[fill=white]{\texttt{:meshNumber}} node[auto, swap] { } (jd1Id); 
\draw[->, dashed] (jd1) to node[fill=white]{\texttt{rdf:type}} node[auto, swap] { } (class); 

%Graph 2
\node (jd2) at (7,5) [rectangle,draw]{\texttt{:Median\_nerve}};
\node (jd2label) at (10,5) [rectangle,draw]{938};
\node (jd2Ssn) at (10,4) [rectangle,draw]{210};

\draw[->, dashed, bend left] (jd2) to node[fill=white]{\texttt{:grayPage}} node[auto, swap] { } (jd2label);
\draw[->, dashed] (jd2) to node[fill=white]{\texttt{:graySubject}} node[auto, swap] { } (jd2Ssn);
\draw[->, dashed] (jd2) to node[fill=white]{\texttt{rdf:type}} node[auto, swap] { } (class); 

%Graph 3
\node (jd3) at (7,2) [rectangle,draw]{\texttt{:Lacrimal\_nerve}};
\node (jd3label) at (10,2) [rectangle,draw]{887};
\node (jd3Age) at (10,1) [rectangle,draw]{200};

\draw[->, dashed, bend left] (jd3) to node[fill=white]{\texttt{:grayPage}} node[auto, swap] { } (jd3label);
\draw[->, dashed] (jd3) to node[fill=white]{\texttt{:graySubject}} node[auto, swap] { } (jd3Age);
\draw[->, dashed] (jd3) to node[fill=white]{\texttt{rdf:type}} node[auto, swap] { } (class); 

%Graph 4
\node (jd4)  at (4,2) [rectangle,draw]{\texttt{:Olfactory\_nerve}};
\node (jd4label) at (1,2) [rectangle,draw]{196};
\node (jd4Category) at (0,1) [rectangle,draw]{A08.800.800.120.640};
%\node (jd4Id) at (1,0) [rectangle,draw]{ABCD123};

\draw[->, dashed, bend right] (jd4) to node[fill=white]{\texttt{:graySubject}} node[auto, swap] { } (jd4label);
\draw[->, dashed] (jd4) to node[fill=white]{\texttt{:meshNumber}} node[auto, swap] { } (jd4Category);
%\draw[->, dashed, bend left] (jd4) to node[fill=white]{\texttt{:id}} node[auto, swap] { } (jd4Id); 
\draw[->, dashed] (jd4) to node[fill=white]{\texttt{rdf:type}} node[auto, swap] { } (class);

%\draw[<->, bend left] (jd1) to node[rectangle, draw, fill=gray]{$\sigma_{12}, 1$} node[auto, swap] { } (jd2);
%\draw[<->] (jd1) to node[near end, rectangle, draw, fill=gray]{$\sigma_{13}, 1$} node[auto, swap] { } (jd3); 
%\draw[<->] (jd1) to node[rectangle, draw, fill=gray]{$\sigma_{14}, 1$} node[auto, swap] { } (jd4); 
%
%\draw[<->] (jd2) to node[rectangle, draw, fill=gray]{$\sigma_{23}, 0.5$} node[auto, swap] { } (jd3); 
%\draw[<->] (jd2) to node[near start, rectangle, draw, fill=gray]{$\sigma_{24}, 0.5$} node[auto, swap] { } (jd4); 
%
%\draw[<->, bend left] (jd3) to node[rectangle, draw, fill=gray]{$\sigma_{34}, 1$} node[auto, swap] { } (jd4); 

\end{tikzpicture}
\caption{Fragment from a knowledge base on human nerves. The fragment was extracted from DBpedia 3.9.}
\label{fig:example}
\end{figure*}

In relational databases, keys are commonly either artificial or sets of columns that allow to describe each resource uniquely. 
Previous works~\cite{pernelle2013automatic,atencia2014b,symeonidou2014} adopt this approach for uniquely describing RDF data and use properties instead of columns.
Several problems occur when trying to detect keys for RDF data.
\begin{enumerate}
\item Resources from the same datasets might not all have the same properties.
For example, in the fragment of DBpedia 3.9 shown in Figure~\ref{fig:example}, only 50\% of the resources have a \texttt{:meshNumber}.
Thus, while the \texttt{:meshNumber} is unique, it cannot be used as a key for this dataset.
\item The inverse problem exists for the \texttt{:graySubject}, which covers all resources but is not unique as the trigeminal nerve and the lacrimal nerve have the same \texttt{:graySubject}.
For our toy dataset, only keys of size larger that 1 exist (e.g., $\{$\texttt{:graySubject}, \texttt{:grayPage}$\}$).
\item The key discovery problem is exponential in the number of properties $n$ in the knowledge base, as the solution space contains $2^n - 1$ possible sets of keys. Thus, na\"ive solutions to the key discovery problem do not scale.
\end{enumerate}
Moreover, depending on the use case, key discovery approaches have to be able to detect a single key (e.g., to link resources within a knowledge base) or to detect all keys for a resource (e.g., when integrating data across knowledge bases).

In this paper, we address the three problems of key discovery within both settings of key discovery (i.e., finding all keys or almost-keys within a given threshold) by using a refinement operator dubbed $\rho$. 
This operator is able to detect sets of properties that describe any instance of a given class in a unique manner.
By these means, it can generate n-tuples of property values that can be used as keys for resources which instantiate a given class.
Our operator relies on a scoring function to compare sets of properties.
Based on this comparison, it can efficiently detect single keys, all keys and even predict whether a key can be found in a given dataset.
In addition to being finite, non-redundant and proper, our operator also scales well and can thus be used on large knowledge bases.
Our contributions are:
\begin{itemize}
\item We provide the first refinement operator for key discovery on RDF knowledge bases.
\item We prove that our operator is finite, non-redundant, proper, but not complete.
\item We utilize the combination of a hash index to compute the discriminability score, i.e. the ability for a set of properties to distinguish their subjects, with two monotonicities of keys to prune the refinement tree and thus ensure that our operator scales.
\item We show that our approach succeeds on datasets where current state-of-the-art approaches fail.
\item We evaluate our operator on the OAEI instance matching benchmark datasets as well as on DBpedia classes with large populations. In particular, we measure the overall runtime, the memory consumption and the reduction ratio of our approach. Our results suggest that we outperform the state of the art w.r.t. correctness and memory consumption. Moreover, our results suggest that our approach terminates within an acceptable time frame even on very large datasets.
\end{itemize}

The rest of this paper is structured as follows:
We begin by defining the problem at hand formally.
Thereafter, we present our operator and prove its theoretical characteristics.
After a discussion of related work, we evaluate our operator on synthetic and real data.
We then conclude and present some future work.

\section{Preliminaries}
In the following, we formalize the definition of keys that underlie this paper. This definition is used by our refinement operator to efficiently detect keys.

\subsection{Keys}
Let $K$ be a finite RDF knowledge base containing instances which belong to a given class and their Concise Bounded Description (CBD).\footnote{For the definition of CBD, see \url{http://www.w3.org/Submission/CBD/}.}
$K$ can be regarded as a set of triples $(s, p, o) \in (\mathcal{R}  \cup \mathcal{B}) \times \mathcal{P} \times (\mathcal{R} \cup \mathcal{L} \cup \mathcal{B})$, where $\mathcal{R}$ is the set of all resources, $\mathcal{B}$ is the set of all blank nodes, $\mathcal{P}$ the set of all predicates and $\mathcal{L}$ the set of all literals.
We call two resources $r_1, r_2 \in \mathcal{R}$ distinguishable w.r.t. a set of properties $P=\{p_1, \ldots p_n\}$ iff $\exists p \in \{p_1, \ldots p_n\}: \neg ((r_1, p, o) \wedge (r_2, p, o))$.
Given a knowledge base $K$, the idea behind \emph{key discovery} is to find one or all sets of properties which make their respective subjects distinguishable in $K$.

\begin{definition}[Key]
We call a set of properties $P \subseteq \mathcal{P}$ a \emph{key} for a knowledge base $K$ (short: key, denoted $key(P, K)$) if all resources in $K$ are distinguishable w.r.t. $P$.
\end{definition}

\begin{definition}[Minimal key]
We call $P$ a \emph{minimal key} (short: mkey) iff $P$ is a key but none of its subsets is. Formally, 
\begin{equation}
mkey(P, K) \Rightarrow key(P, K) \wedge (\neg \exists P' \subset P: key(P',K)).
\end{equation}
\end{definition}

\subsection{Discriminability} \label{sec:discr}
A \emph{key} for an RDF knowledge base and a \emph{primary composite key} for a database share the same aim.
RDF properties represent the projection of database fields into the RDF paradigm, as well as each resource represents a tuple.
However, while a tuple element has only one single value, a property might link a resource to more than one RDF objects.
Therefore, two resources are distinguishable from each other w.r.t. a set of properties $P$ if their sets of objects are different for at least one $p \in P$.

To the best of our knowledge, this particular feature of keys was not taken into account by previous works on key discovery for RDF data~\cite{pernelle2013automatic,symeonidou2014,atencia2014b}.
For instance, \cite{pernelle2013automatic,symeonidou2014} consider two resources $r$ and $r'$ as not distinguishable w.r.t. $P$ if for each $p \in P$ they share at least one object.

\begin{figure*}[htb]
	\centering
	\includegraphics[width=1.4\columnwidth]{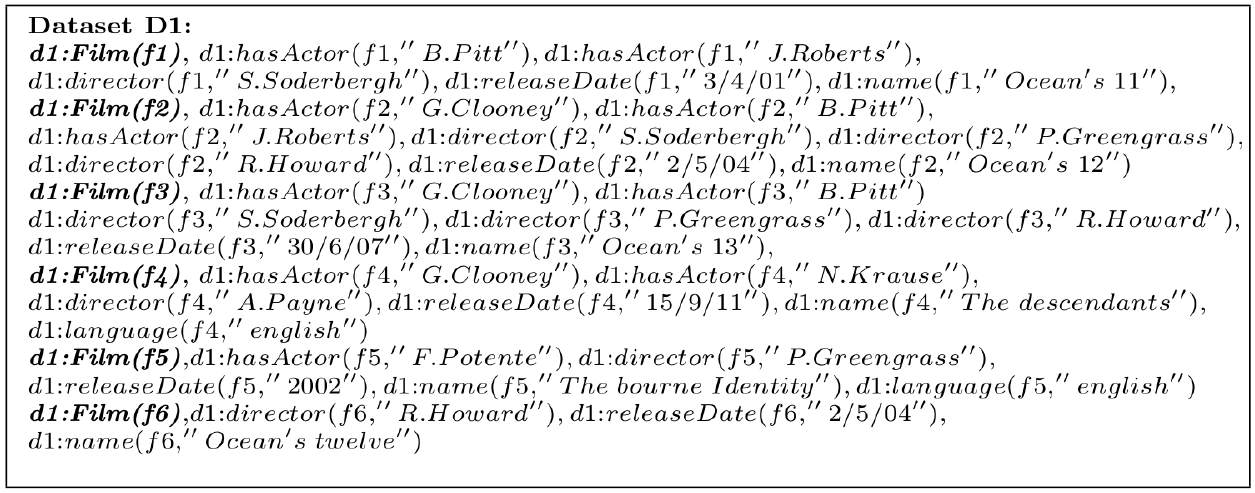}
	\caption{Example of RDF data, as reported in Symeonidou et al., 2014.}
	\label{fig:rdfdata}
\end{figure*}

Figure~\ref{fig:rdfdata} shows an example of RDF data, as reported in 
\cite{symeonidou2014}.
Here, the authors claim that $P=\{p_1\}=\{$\texttt{:hasActor}$\}$ is not a key because \texttt{"G.Clooney"} is the object of more than one instance of \texttt{:Film}.
We would instead consider $P$ as a key, since every film is linked with a different set of objects ($sobj$), i.e.:
\begin{flushleft} \footnotesize
$sobj( \texttt{:f1},p_1)=\{ \texttt{"B.Pitt"}, \texttt{"J.Roberts"} \} $\\
$sobj( \texttt{:f2},p_1)=\{ \texttt{"G.Clooney"}, \texttt{"B.Pitt"}, \texttt{"J.Roberts"} \} $\\
$sobj( \texttt{:f3},p_1)=\{ \texttt{"B.Pitt"}, \texttt{"G.Clooney"} \} $\\
$sobj( \texttt{:f4},p_1)=\{ \texttt{"G.Clooney"}, \texttt{"N.Krause"} \} $\\
$sobj( \texttt{:f5},p_1)=\{ \texttt{"F.Potente"} \} $\\
$sobj( \texttt{:f6},p_1)=\varnothing$
\end{flushleft}

Note that \texttt{:f6} is still distinguishable from the other resources w.r.t. $P$, since no other instance of \texttt{:Film} in the knowledge base has 0 actors.
This particular case was not considered, for example, by the authors of \cite{atencia2014b}.

\subsection{Properties of a key}
Keys abide by several monotonicities~\cite{pernelle2013automatic}. 
The first is the so-called \emph{key monotonicity}, which is given by 
\begin{equation}
key(P, K) \Rightarrow \forall P': P \subseteq P' \Rightarrow key(P', K).
\end{equation}
The reciprocal monotonicity is called the \emph{non-key monotonicity}, which is given by 
\begin{equation}
\neg key(P, K)  \Rightarrow (\forall P' \subseteq P: \neg key(P', K)). 
\end{equation}
In other words, adding a property to a key yields another key, whilst removing a property to a non-key yields another non-key.
In this paper, we present a key discovery approach based on refinement operators.

\section{A Refinement Operator for Key Discovery}
In this section, we present our refinement operator for key discovery and prove some of its theoretical characteristics. Our formalization is based on that presented in~\cite{DBLP:journals/ml/LehmannH10}. 
%These are of central importance when aiming to discover single keys.

Let $P \subseteq \mathcal{P}$. 
Moreover, let $score: 2^\mathcal{P} \rightarrow [0, 1]$ be a function that maps each subset $P$ of $\mathcal{P}$ to the fraction of subject resources from $K$ that are distinguishable by using $P$.
\begin{thm}[Induced quasi-ordering] 
The score function induces a quasi-ordering $\preceq$ over the set $\mathcal{P}$, which we define as follows:
\begin{equation} 
P_1 \preceq P_2 \Leftrightarrow \min\limits_{p \in P_1} score(p) \leq \min\limits_{q \in P_2} score(q).
\end{equation} 
\end{thm}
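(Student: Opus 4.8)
The plan is to verify directly the two defining properties of a quasi-ordering (preorder): reflexivity and transitivity. The key structural observation is that $\preceq$ is nothing more than the pullback of the standard total order $\leq$ on $[0,1]$ along the function $f : 2^{\mathcal{P}} \to [0,1]$ defined by $f(P) = \min_{p \in P} score(p)$. Concretely, $P_1 \preceq P_2$ holds precisely when $f(P_1) \leq f(P_2)$. Since any relation obtained by pulling back a preorder through a function is again a preorder, both properties follow mechanically once $f$ is seen to be well defined.

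First I would establish that $f$ is well defined. Because $K$ is a finite knowledge base, the predicate set $\mathcal{P}$ is finite, and hence every $P \subseteq \mathcal{P}$ is a finite set on which $\min_{p \in P} score(p)$ attains a value in $[0,1]$ (adopting the usual convention for the empty set, or simply restricting attention to nonempty property sets, which are the only ones of interest for key discovery). This step removes the only potential gap, namely that the minima appearing in the definition might fail to exist.

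Next I would prove reflexivity: for any $P$ we have $f(P) = f(P)$, so $f(P) \leq f(P)$ by reflexivity of $\leq$ on the reals, which yields $P \preceq P$ directly from the definition. For transitivity I would assume $P_1 \preceq P_2$ and $P_2 \preceq P_3$, unfold the definition to obtain $f(P_1) \leq f(P_2)$ and $f(P_2) \leq f(P_3)$, and conclude $f(P_1) \leq f(P_3)$ by transitivity of $\leq$ on the reals, i.e.\ $P_1 \preceq P_3$.

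I do not expect a genuine obstacle here: the result is essentially the general fact that composing a scoring function with a total order produces a preorder, and the only care needed is the well-definedness of the minimum. It is worth remarking that $\preceq$ is in general \emph{not} antisymmetric, since two distinct property sets sharing the same minimal score are mutually comparable but not equal; this is precisely why the statement claims only a quasi-ordering rather than a partial order.
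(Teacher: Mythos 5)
Your proposal is correct and follows essentially the same route as the paper, which likewise derives reflexivity and transitivity of $\preceq$ directly from the corresponding properties of $\leq$ on $\mathbb{R}$ (and similarly notes the failure of antisymmetry). Your additional remarks on the well-definedness of the minimum over finite property sets and the pullback-of-a-preorder framing merely make explicit what the paper leaves implicit.
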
  
The reflexivity and transitivity of $\preceq$ are direct consequences of the reflexivity and transitivity of $\leq$ in $\mathbb{R}$.
Note that $\preceq$ is not antisymmetric as two sets of properties $P_1$ and $P_2$ can be different and contain the property with the lowest score, leading to $P_1 \preceq P_2$ and $P_2 \preceq P_1$.
\begin{definition}[Refinement Operator] 
Given a quasi-ordered space $(S, op)$ an upward refinement operator $r$ is a mapping from $S$ to $2^S$ such that $\forall s \in S: s' \in r(s) \Rightarrow op(s, s')$. $s'$ is then called a generalization of $s$.
\end{definition} 
We define our refinement operator over the space $(\mathcal{P}, \preceq)$.
First, we begin by ordering the elements of $\mathcal{P}$ according to their score in ascending order, i.e., $\forall p_i, p_j \in \mathcal{P}, i \leq j \Rightarrow score(p_i) \leq score(p_j)$.
Then, we can define our operator as follows:
\begin{equation}
\rho(P) = 
\begin{cases}
\mathcal{P} \mbox{ iff } P = \emptyset, \\ 
\{P \cup \{p_1\}, \ldots, P \cup \{p_i\}\} \mbox{ where } p_j \in P \Rightarrow i < j.
\end{cases}
\end{equation}
For example, the complete refinement graph for $\mathcal{P} = \{p_1, p_2, p_3\}$ is given in Figure \ref{fig:refinement}.
We use this operator in an iterative manner by only expanding the node with the highest score in the refinement graph.
The intuition behind this approach to searching for key is that by ordering properties by their score, we can easily detect and expand the most promising sets of properties without generating redundant nodes. 
To prove some of the characteristics of $\rho$, we need to explicate the concept of a refinement chain:

\begin{definition}[Refinement chain]
A set $P_2 \in \mathcal{P}$ belong to the refinement chain of $P_1 \in \mathcal{P}$ iff $\exists k \in \mathbb{N}: P_2 \in \rho^k(P_1), $ where $\rho^{k}(P) = \begin{cases}
\mathcal{P} \mbox{ iff } k = 0, \\ 
\rho(\rho^{k-1}(P)) \mbox{ else }.
\end{cases}$
\end{definition}
For example, a refinement chain exists between $\{p_3\}$ and $\{p_1, p_2, p_3\}$ in the example shown in Figure~\ref{fig:refinement}. There is yet no refinement chain between $\{p_1\}$ and $\{p_2\}$ in the same example.

\begin{figure}
\centering
\begin{tikzpicture}
\centering
\node[draw] (bottom) at (0,0) {$\emptyset$} ;
\node[draw] (p1) at (-2.5,1) {$\{p_1\}$}  edge [<-] (bottom);
\node[draw] (p2) at (0, 1) {$\{p_2\}$}  edge [<-] (bottom);
\node[draw] (p3) at (2.5,1) {$\{p_3\}$} edge [<-] (bottom);
\node[draw] (p2p1) at (0,2) {$\{p_1, p_2\}$} edge [<-] (p2);
\node[draw] (p1p3) at (1.5,2) {$\{p_1, p_3\}$}  edge [<-] (p3);
\node[draw] (p2p3) at (3.5,2) {$\{p_2, p_3\}$} edge [<-] (p3);
\node[draw] (p1p2p3) at (3.5,3) {$\{p_1,p_2, p_3\}$} edge [<-] (p2p3);
\end{tikzpicture}	
\caption{Complete refinement graph for $\mathcal{P} = \{p_1, p_2, p_3\}$. The nodes of the graph are subsets of $\mathcal{P}$. A directed edge $(a,b)$ means $b \in \rho(a)$.}
\label{fig:refinement}
\end{figure}
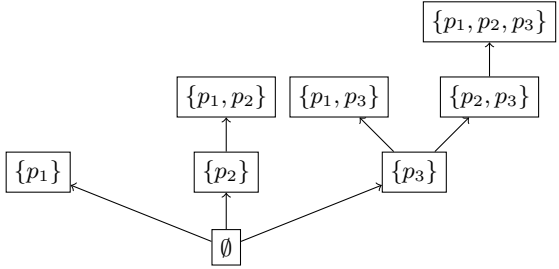

A refinement operator $r$ over the quasi-ordered space $(S, op)$ can abide by the following criteria.
\begin{definition}[Finiteness]
$r$ is finite iff $r(s)$ is finite for all $s \in S$.
\end{definition}
\begin{definition}[Properness]
$r$ is proper if $\forall s \in S, s' \in r(s) \Rightarrow s \neq s'$. 
\end{definition}
\begin{definition}[Completeness]
$r$ is said to be complete if for all $s$ and $s'$, $op(s', s)$ implies that there is a refinement chain between $s$ and $s'$.
\end{definition}
\begin{definition}[Redundancy]
A refinement operator $r$ over the space $(S, op)$ is redundant if two different refinement chains can exist between $s \in S$ and $s'\in S$.
\end{definition}

In the following, we show that $\rho$ is finite, proper and non-redundant but not complete.
\begin{thm}
$\rho$ is finite when applied to a finite knowledge base $K$.
\end{thm}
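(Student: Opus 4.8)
The plan is to reduce finiteness of $\rho$ to the single observation that a finite knowledge base mentions only finitely many predicates, and then to read off a cardinality bound for $\rho(P)$ directly from its two defining cases. Recall from Definition~(Finiteness) that it suffices to show $\rho(P)$ is a finite set for every $P \subseteq \mathcal{P}$.

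First I would argue that $\mathcal{P}$ itself is finite. Since $K$ is a finite set of triples $(s,p,o)$, its projection onto the predicate component is a finite set, and $\mathcal{P}$ is precisely this collection of properties occurring in $K$; hence I may fix $n := |\mathcal{P}| < \infty$.

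Next I would split along the two branches in the definition of $\rho$. In the base case $P = \emptyset$, the operator returns $\mathcal{P}$ itself, so $|\rho(\emptyset)| = n$. In the recursive case $P \neq \emptyset$, the operator returns the family $\{P \cup \{p_1\}, \ldots, P \cup \{p_i\}\}$, whose size is $i$; the constraint $p_j \in P \Rightarrow i < j$ forces $i$ to be bounded by the least index of a property already in $P$, and in particular $i \leq n$. In both branches we obtain $|\rho(P)| \leq n$, hence $\rho(P)$ is finite.

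Combining the two cases yields finiteness of $\rho(P)$ for every $P \subseteq \mathcal{P}$, which is exactly the required criterion. The only genuine content is the passage from the finiteness of $K$ to the finiteness of $\mathcal{P}$; once that is in place the bounds on both branches are immediate and require no further calculation. The main (mild) obstacle is therefore purely one of interpretation: one must read $\mathcal{P}$ as the properties actually appearing in the finite $K$, rather than as an abstract and potentially infinite vocabulary, so that the bound $|\mathcal{P}| = n < \infty$ is justified.
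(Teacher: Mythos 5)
Your proposal is correct and follows essentially the same route as the paper: deduce $|\mathcal{P}| < \infty$ from the finiteness of $K$, then bound $|\rho(P)| \leq |\mathcal{P}|$. Your explicit case split over the two branches of $\rho$ merely spells out what the paper compresses into the remark ``per definition, $|\rho(P)| \leq |\mathcal{P}|$'', so the two arguments are the same in substance.
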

\begin{proof}
The finiteness of $\rho$ is a direct result of $K$ being finite. The upper bound of the number of properties in $K$ is the number of triples in $K$. Thus, $|K| < \infty \Rightarrow |\mathcal{P}| < \infty$. Per definition, $|\rho(P)| \leq |\mathcal{P}|$. Thus, we can conclude that $\forall P \in \mathcal{P}: |\rho(P)| < \infty$.
\end{proof}

\begin{thm}
$\rho$ is proper.
\end{thm}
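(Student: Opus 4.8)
The plan is to verify the properness condition directly against its definition: I must show that for every $P$ and every $s' \in \rho(P)$, we have $s' \neq P$. Since $\rho$ is defined by cases, I would structure the argument around the same two cases, treating the base case $P = \emptyset$ first and then the recursive case $P \neq \emptyset$.

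For the base case, $\rho(\emptyset) = \mathcal{P}$ returns the singleton sets (as depicted by the edges leaving $\emptyset$ in Figure~\ref{fig:refinement}), each of which is non-empty. Since $P = \emptyset$ while every such singleton has cardinality one, none of them can equal $P$, so this case is immediate. For the recursive case, every element of $\rho(P)$ has the form $P \cup \{p_k\}$ for some index $k \leq i$, where $i$ is constrained by $p_j \in P \Rightarrow i < j$. The key step I would exploit is that this constraint forces $k \leq i < j$ for every $p_j \in P$, so $p_k$ carries a strictly smaller index than any property already present in $P$; hence $p_k \notin P$. It follows that $P \cup \{p_k\}$ strictly contains $P$, with $|P \cup \{p_k\}| = |P| + 1$, and therefore $P \cup \{p_k\} \neq P$. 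Combining the two cases yields properness.

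The step requiring the most care, rather than a genuine obstacle, is the correct reading of the indexing condition $p_j \in P \Rightarrow i < j$ in the definition of $\rho$. It is precisely this condition that guarantees each refinement introduces a property whose index lies strictly below the minimum index occurring in $P$, and thus a genuinely new property rather than one already in $P$. Once this observation is made explicit, the inequality $P \cup \{p_k\} \neq P$ is immediate and the proof concludes with no further calculation.
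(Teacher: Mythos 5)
Your proof is correct and takes essentially the same route as the paper: each element of $\rho(P)$ arises by adding exactly one property to $P$, hence has cardinality $|P|+1$ and so cannot equal $P$. You are in fact somewhat more careful than the paper's own proof, which asserts that a property is ``added'' without verifying it is genuinely new --- your use of the index condition $p_j \in P \Rightarrow i < j$ to conclude $p_k \notin P$, together with the explicit base case $\rho(\emptyset)$, fills in exactly the steps the paper leaves implicit.
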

\begin{proof}
The properness of $\rho$ also results from the definition of $\rho$. As we always add exactly a property to $P$ when computing $\rho(P)$, we know that $|\rho(P)| = |P| + 1$. Thus, $\rho(P) \neq P$ must hold. 
\end{proof}

\begin{thm}
$\rho$ is not complete.
\end{thm}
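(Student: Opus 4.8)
The plan is to refute completeness with a single counterexample, which suffices because completeness is a statement universally quantified over all pairs $s, s'$. I would reuse the running example of Figure~\ref{fig:refinement} and take the two singletons $\{p_1\}$ and $\{p_2\}$ as the witnesses. The argument then splits into two independent checks: that this pair is comparable under the induced quasi-ordering $\preceq$, and that no refinement chain links the two sets. Establishing both simultaneously contradicts the defining implication of completeness.

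For the comparability check, I would use the fact that the elements of $\mathcal{P}$ are ordered by ascending score, so that $i \leq j \Rightarrow score(p_i) \leq score(p_j)$. Since each witness is a singleton, its minimum score is the score of its single element, so $\min_{p \in \{p_1\}} score(p) = score(p_1) \leq score(p_2) = \min_{q \in \{p_2\}} score(q)$, which by definition yields $\{p_1\} \preceq \{p_2\}$, i.e. $op(\{p_1\}, \{p_2\})$ holds. For the absence of a refinement chain, I would argue structurally: by definition every element of $\rho(P)$ has the form $P \cup \{p_i\}$ and hence strictly contains $P$, so iterating $\rho$ can only enlarge a set. Consequently no descendant of $\{p_2\}$ can equal $\{p_1\}$, since every such descendant still contains $p_2$; and $\rho(\{p_1\}) = \emptyset$ because the operator only adds properties $p_i$ with index strictly smaller than every index in $P$, and no index lies below $1$, so $\{p_2\}$ is unreachable from $\{p_1\}$ as well. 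This matches the observation already made under Figure~\ref{fig:refinement} that no refinement chain exists between $\{p_1\}$ and $\{p_2\}$.

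Putting the two checks together, $op(\{p_1\}, \{p_2\})$ holds while no refinement chain connects the two sets, so the implication required for completeness fails and $\rho$ is therefore not complete. I expect the only delicate point to be keeping the direction of $\preceq$ aligned with the direction in which $\rho$ traverses the refinement graph; the substantive reason behind incompleteness is that $\rho$ is strictly monotone with respect to set inclusion, whereas $\preceq$ compares sets through a single extremal score and thus relates pairs --- such as two distinct singletons --- that are incomparable under inclusion and hence never joined by a chain.
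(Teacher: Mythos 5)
Your proposal is correct and takes essentially the same route as the paper: the paper's proof likewise exhibits two singletons comparable under $\preceq$ by the score ordering (it uses $\{p_1\}$ and $\{p_n\}$ rather than $\{p_1\}$ and $\{p_2\}$) and then rules out any refinement chain, arguing via cardinality that every set reachable from a singleton has magnitude greater than one. Your version is, if anything, slightly more thorough, since you also verify explicitly that $\rho(\{p_1\}) = \emptyset$ and thus exclude a chain in both directions, a point the paper leaves implicit.
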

\begin{proof}
The incompleteness of $\rho$ follows from the definition of $\rho(\emptyset)$. Let $\mathcal{P} = \{p_1, \ldots, p_n\}$. Then $\{p_1\} \preceq \{p_n\}$. Yet, there is clearly no refinement chain between $\{p_n\}$ and $\{p_1\}$ as any subset of $\mathcal{P}$ connected to $p_n$ via a refinement chain must have a magnitude larger than one. Yet, the magnitude of $\{p_1\}$ is 1, which shows that $\{p_1\}$ cannot be connected to $\{p_n\}$  via a refinement chain.
\end{proof}

\begin{thm}
$\rho$ is not redundant.
\end{thm}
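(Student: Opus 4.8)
The plan is to establish the contrapositive of redundancy directly: between any two sets $P_1, P_2 \in 2^{\mathcal{P}}$ there exists \emph{at most one} refinement chain. First I would record the two structural facts that drive everything. By properness, each application of $\rho$ to a non-empty set increases its cardinality by exactly one; and by the definition of $\rho$, the single property $p_k$ adjoined to a set $P$ must satisfy $k < j$ for every $p_j \in P$. Hence, whenever $Q' \in \rho(Q)$ with $Q \neq \emptyset$, we have $Q' = Q \cup \{p_k\}$ with $p_k$ strictly smaller in index than every element of $Q$; equivalently, $p_k$ is the unique minimum-index element of $Q'$ and $Q = Q' \setminus \{p_k\}$. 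The same equation $Q = Q' \setminus \{p_{\min}\}$ also subsumes the base case, since $\rho(\emptyset) = \mathcal{P}$ produces every singleton, and deleting the sole (hence minimum-index) element of a singleton returns $\emptyset$.

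The heart of the argument is a \emph{backward-determinism} observation. For any non-empty $Q'$ lying in the image of $\rho$, its predecessor in any refinement chain is forced to be $Q' \setminus \{p_{\min}\}$, where $p_{\min}$ is the element of $Q'$ with the smallest index. Indeed, whichever property was adjoined last had to be smaller in index than all properties present before it, so it is necessarily the overall minimum-index element of $Q'$. This shows that each node of the refinement graph has a uniquely determined predecessor; equivalently, the graph is a forest, and in fact a tree spanning $2^{\mathcal{P}}$ once $\emptyset$ is taken as its root (every non-empty set is reached from $\emptyset$ by repeatedly adjoining strictly smaller indices).

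Given this, I would finish as follows. Suppose $Q_0 = P_1, Q_1, \ldots, Q_k = P_2$ is a refinement chain with $Q_{i+1} \in \rho(Q_i)$. By properness $|Q_{i+1}| = |Q_i| + 1$, so the length is pinned down to $k = |P_2| - |P_1|$, and a chain can exist only when $|P_1| \leq |P_2|$. Reading the chain backward and applying the unique-predecessor rule, $Q_{k-1}$ is determined by $P_2$, then $Q_{k-2}$ by $Q_{k-1}$, and so on down to $Q_0$. Thus every $Q_i$ is uniquely fixed by $P_2$ and the index $i$, so any two chains between $P_1$ and $P_2$ must coincide termwise; and if the backward deletions fail to land exactly on $P_1$, no chain exists at all. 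Either way there is at most one chain, so $\rho$ is non-redundant.

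The step I expect to be the main obstacle is justifying the unique-predecessor claim cleanly at the two boundaries of the chain, namely checking it against the special rule $\rho(\emptyset) = \mathcal{P}$ and the somewhat unusual convention $\rho^0(P) = \mathcal{P}$ used in the definition of a refinement chain, so that the backward reading is legitimate at the bottom of the chain as well as in its interior. Everything else reduces to the strictly-decreasing-index bookkeeping already forced by the ordering $score(p_i) \leq score(p_j)$ for $i \leq j$.
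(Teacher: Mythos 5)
Your proof is correct and rests on essentially the same idea as the paper's: the index constraint in $\rho$ (the adjoined property must have strictly smaller index than everything already present) forces every node to have a unique father, which the paper establishes by contradiction at a divergence point of two hypothetical chains, and which you establish constructively by identifying the father explicitly as $Q' \setminus \{p_{\min}\}$. Your backward-determinism packaging is a slightly cleaner rendering of the same argument --- it additionally pins down the chain length as $|P_2|-|P_1|$ and yields the tree structure of the refinement graph, points the paper leaves implicit --- but the key lemma is identical.
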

\begin{proof}
$\rho$ being redundant would mean that a pair of property sets $(P, P')$ exist, where $P$ is linked to $P'$ by two distinct refinement chains $C_1$ and $C_2$.
Given that these two chains begin and end at the same node, there must be a node $N$ that is common to the two chains but has two distinct fathers $N_1$ and $N_2$ that are such that $N_1$ belongs to $C_1$ and not to $C_2$ while $N_2$ belong to $C_2$ but not to $C_1$.
Now, $N_1$ being the father of $N$ means $\exists p_i \in \mathcal{P}: N = N_1 \cup {p_i}$.
Conversely, $N_2$ being the father of $N$ also means that $\exists p_j \in \mathcal{P}: N = N_2 \cup {p_j}$.
Now if $N_1 \neq N_2$, then we can assume wlog that $i < j$.
For $N \in \rho(N_2)$ to hold, $j$ resp. $i$ must be less than the index of any element of $N_2$ resp. $N_1$.
Moreover, for $N_1 \cup {p_i} = N_2 \cup {p_j}$ to hold, $p_i$ would have to already be a element of $N_2$.
However, by virtue of the construction of $\rho$, this means that $(N_2 \cup \{p_j\}) \notin \rho(N_2)$ given that $p_i \in N_2$ and $i < j$.
Thus, we can conclude that there cannot be any to distinct refinement pairs between two subsets of $\mathcal{P}$.
\end{proof}

\section{Approach}
In this section, we present \textbf{ROCKER}, a \underline{r}efinement \underline{o}perator approa\underline{c}h for \underline{ke}y discove\underline{r}y.
Our approach was designed with scalability in mind. 
To this end, it implements a scalable version of the discriminability $score$ function based on a hash index.
Moreover, we use the monotonicities of keys to check for the existence of keys as well as decide on nodes that need not be refined.

\subsection{Implementation}
In order to increase the scalability of our operator, we chose an hybrid approach using both in-memory and disk storage database.
The following tasks are then performed by ROCKER:
\begin{enumerate}
\item the knowledge base model is built using the Apache Jena library;
\item for each class, its instances and properties are extracted;
\item this information is stored and indexed over a \emph{B-tree} structure, whereas the instance URI is used as a key;
\item object values are sorted alphabetically, imploded into a string, indexed using hash codes, and stored into each tuple element;
\item the refinement operator starts from the empty-set node;
\item at each node, the discriminability score is computed by performing a query to the database;
\item the computation terminates according to some rule.
\end{enumerate}

\subsection{Definition of the score function} \label{sec:score}
We firstly define the set of subjects of the knowledge base, i.e. the instances of a given class.
\begin{equation}
S = \left\{ s: \exists (s, p, o) \in K \right\}
\end{equation}
We then provide the definition of discriminability for two resources w.r.t. a set of properties $P$.
\begin{flalign}
discr(s, s', P) \Leftrightarrow \\ \forall s \in S \ \forall p \in P \ \nexists s' \in S : sobj(s,p) \equiv sobj(s',p)
\end{flalign}
where $sobj$ is the ``set of objects'' function introduced in Section~\ref{sec:discr}.
Finally, we define the score of $P$ (denoted $score(P)$) as the number of subject resources of $K$ that are distinguishable w.r.t. $P$ by using the following formula:
\begin{equation} \label{def:score}
score(P) = \frac{ \left\vert \left\{ s \in S: \forall s' \in S \ s \neq s' \Rightarrow discr(s, s', P) \right\} \right\vert }{ \left\vert S \right\vert }.
\end{equation}
The set $P$ is a key if $score(P) = 1$, i.e., if $P$ covers all subject resources from $K$ and all resources are distinguishable w.r.t. $P$.

Basing the refinement on this scoring function has the advantage of allowing ROCKER to cover not only keys but also k-almost-keys \cite{symeonidou2014}, which are defined as follows: $P$ is a k-almost-key if $\exists X \subseteq S: |X| \leq k \land \forall s, s' \in S \backslash X: discr(s, s', P)$. Consider for example the data shown in Figure \ref{fig:rdfdata}. If \texttt{:f2} did not have \texttt{"J. Roberts"} in the list of its actors, then it would not be distinguishable from \texttt{:f3}. In this case, the set \{\texttt{hasActor}\} would be 2-almost-key.
We can derive a minimal score $\alpha$ for a k-almost-key by simply using the maximal magnitude of $X$ within $score(P)$:
\begin{equation} \label{def:score}
|X| \leq k \rightarrow score(P) \geq \frac{|S| - k}{|S|} = \alpha.
\end{equation}
Note that a key is a 0-almost-key. Moreover, every $k$-almost-key with $k \geq 0$ is also a $(k+1)$-almost-key. 

In our implementation, the score function for $P$ is thus calculated by querying the class table for the columns associated with the properties in $P$.
For each row, the returned values are then concatenated and added to a sorted set, where duplicates are discarded automatically by virtue of the definition of set.
The size of this final set represents the numerator for Equation~\ref{def:score}.

\subsection{Refinement Operator}
The pseudo code of ROCKER's refinement operator is shown in Algorithm~\ref{alg:approach}.
Given a set of triples $K$ and a set of properties $\mathcal{P}$, we begin by checking whether our $\rho$-based approach is able to find a key at all.
This can be done by computing $score(\mathcal{P})$. If the score is less than 1 (i.e., if $\mathcal{P}$ is not a key), then we know no key exists by virtue of the non-key monotonicity.
We can thus terminate and return $\varnothing$, unless a threshold $\alpha < 1$ is given.
Under this setting, we terminate if $score(\mathcal{P}) < \alpha$.
Now if $\mathcal{P}$ is a key, then some of its subsets might be minimal keys.
We then begin by sorting the elements of $\mathcal{P}$ w.r.t. their score.
This heuristic tries to make the refinement operator discover keys earlier, so that their descendants can be pruned from the refinement tree, thus decreasing the number of score calculations.
A maximal-priority queue is then initialized, where the priority of an element is its score.
The queue is initialized with the empty set with a priority of 0.
We then take the element $P$ of the queue with the highest priority iteratively and remove it from the queue.
Thank to the non-redundancy of $\rho$, there is no need to check whether $P$ has been seen before.
$P$ is refined to $P'$, whose elements $p$ are then checked iteratively.
We thus evaluate the scores of all elements of $P'$. 
If their score is less than 1, then they are added to the queue.
If their score is 1, then we add them to the solution and do not add them to the queue, as they do not need to be refined any further by virtue of the key monotonicity.
We then return the set of all keys.

Our approach has several advantages due to the theoretical characteristics of $\rho$ and the key monotonicities: 
\begin{enumerate}
\item It terminates quickly if there is no key by virtue of using the non-key monotonicity.
\item It is guaranteed to find all existing minkeys by virtue of the key monotonicity.
\item Using a sorted queue, it encourages node pruning by evaluating the most promising nodes first.
\item It never visits the same node twice due to the non-redundancy of $\rho$.
\item It is ensured to find all existing keys.
\end{enumerate}

\begin{algorithm}[htb]
\begin{algorithmic}[1]
\REQUIRE Set of triples $K$
\STATE $\mathcal{P} = \{p: \exists s, p \mbox{ with } (s, p, o) \in K\}$
\IF{$score(\mathcal{P}) < 1$} 
	\RETURN $\varnothing$;
\ENDIF
\STATE $\mathcal{P} =$ sortByScore$(\mathcal{P})$;
\STATE MaxPriorityQueue q = new Queue();
\STATE Set solution = new Set();
\STATE q.add($\varnothing$, 0); // add $\varnothing$ with priority 0
\WHILE{$\neg$q.isEmpty()}
	\STATE $P' = q.getFirst()$;
	\STATE $q.removeFirst()$;
	\STATE $P = \rho(P')$;
	\FORALL{$p \in P$}
		\STATE $\sigma = score(p)$;
		\IF{$\sigma == 1$}
			\STATE solution.add(p);
		\ELSE
			\STATE q.add(p, $\sigma$);
		\ENDIF
	\ENDFOR
\ENDWHILE
\RETURN solution;
\end{algorithmic}
\caption{ROCKER's algorithm for detecting all keys. The algorithm for detecting a single key does not require the solution variable. Instead, it returns the first $P$ having $score(P)=1$ it finds.} 
\label{alg:approach}
\end{algorithm}

\subsection{Search Strategy}

As already mentioned in \cite{pernelle2013automatic}, the number of nodes to visit in the key discovery problem is exponential w.r.t. the number of properties considered.
More precisely, given $n$ properties, the computational complexity of our algorithm is $O(2^n)$ in the worst case, i.e. when there exists one only key formed by all properties.
We tackle this issue by introducing a \textit{fast search} strategy, which can be enabled to speed up the computation.
Within this optional setting, whenever a key is found, at the next iteration all branches containing parts of the key are pruned from the refinement tree.
This strategy tries to improve the runtime while fostering diversity among the discovered keys.
Moreover, we consider properties whose atomic candidate keys have a score greater than a threshold $\tau$.
This lets the algorithm discard properties that alone distinguish less instances, thus having a lower probability to be part of a key.

\section{Related Work}

Key discovery is a rather new research field within the domain of Linked Data, although the issue of finding keys among fields has been inherited from relational databases.
However, relational databases do not consider semantics (e.g., subsumption relations) which belong to the core of Linked Data.
Previous work on key discovery for the Semantic Web can be found in~\cite{pernelle2013automatic,atencia2014b,symeonidou2014}.
For instance, KD2R is an automatic discovery tool for composite keys in RDF data sources that may conform to different schemata~\cite{pernelle2013automatic}.
It relies on the creation of prefix trees, which serve for finding maximal undetermined keys and non-keys.
However, state-of-the-art approaches as Linkkey and SAKey have shown to outperform KD2R on runtime and number of generated keys~\cite{atencia2014b,symeonidou2014}.
To the best of our knowledge, not only is ROCKER the first refinement-operator-based approach for key discovery, it is also the first machine-learning-based approach for key discovery.

Independently on the application domain, the key discovery problem is a sub-problem of Functional Dependencies (FDs), as every element is distinguishable only by its attributes.
Keys or FDs are widely used in ontology alignment, as well as in data mining~\cite{journals/datamine/MannilaT97}, reverse engineering~\cite{chiang1994reverse}, and query optimization~\cite{mannila1994algorithms,ilyas2004cords}.
In particular, blocking methods such as \cite{conf/aaai/MichelsonK06} utilize approximate keys to reduce the computational complexity of dataset joins.
Unsupervised learning approaches aim at finding links among datasets by comparing datatype values of properties contained into minimal keys~\cite{song2011automatically}.
The so-called collective or global approaches of data linking use keys to generate identity links between instance joins for the final scope of enriching the ontology with the collected information~\cite{sais2009combining,arasu2009large}.

As previously mentioned, one of the main application areas of ROCKER is link discovery.
Several approaches have been developed in previous works to detect matching properties and using them for link discovery.
For example,~\cite{DBLP:conf/semweb/NgomoLAH11} relies on the hospital-residents problem to detect property matches.
Other approaches based on genetic programming (e.g.,~\cite{DBLP:conf/esws/NikolovdM12}) detect matching properties while learning link specifications, which currently implements several time-efficient approaches for link discovery. 
\cite{DBLP:conf/ijcai/NgomoA11} proposes an approach based on the Cauchy-Schwarz inequality that allows discarding a large number of superfluous comparisons.
HYPPO~\cite{DBLP:conf/semweb/Ngomo11} and \hr3~\cite{DBLP:conf/semweb/Ngomo12} rely on space tiling in spaces with measures that can be split into independent measures across the dimensions of the problem at hand.
In particular, \hr3 was shown to be the first approach that can achieve a relative reduction ratio $r'$ less or equal to any given relative reduction ratio $r > 1$.
In the ACIDS approach, similarity measures are performed on property values in order to yield features for machine-learning classifiers as support vectors machines~\cite{SONG12}.
Amongst other link discovery approaches, RDF-AI~\cite{SCH+09} relies on a five-step method that comprises the preprocessing, matching, fusion, interlink and post-processing of data sets.

\begin{table}[ht]\scriptsize
{\caption{Key extraction quality results.}\label{tab:keyquality}}
\begin{tabularx}{\columnwidth}{Xrrr}
\toprule
\textbf{Dataset} & \textbf{ROCKER} & \textbf{Linkkey} & \textbf{SAKey} \\
\midrule
Restaurant 1 & 3 (100\%, 100\%) & 0 (0\%, 0\%) & 8 (62\%, 25\%) \\
Restaurant 2 & 3 (100\%, 100\%) & 0 (0\%, 0\%) & 7 (42\%, 42\%) \\
Village & 3 (100\%, 100\%) & - & - \\
\bottomrule
\end{tabularx}
\end{table}

\section{Evaluation}

\subsection{Experimental Setup}
We evaluated ROCKER w.r.t. four characteristics: its runtime, RAM consumption, key extraction quality, and reduction ratio RR~\cite{pernelle2013automatic} between visited and total nodes.
\begin{equation}
RR( \alpha ) = 1 - \frac{|vnodes(\alpha)|}{2^{|\mathcal{P}|}}.
\end{equation}

Our approach was evaluated on data from twelve different datasets.
The first two datasets were chosen in order to evaluate ROCKER on an existing artificial benchmark.
Both \texttt{Restaurant 1} and \texttt{2} belong to the Ontology Alignment Evaluation Initiative (OAEI) benchmark.
We then evaluated the scalability of ROCKER on ten other datasets generated from DBpedia.
We built the datasets using the RDFSlice tool~\cite{marxsoru2014}, so that each of them contains a class with its instances and their CBD.
According to DBtrends\footnote{\url{http://dbtrends.aksw.org/}}, these classes rank among the top 20 of the most populated classes in DBpedia 3.9.
The domains vary from geography (\texttt{Village}, \texttt{ArchitecturalStructure}) to professionals (\texttt{Artist}, \texttt{SoccerPlayer}) and abstract concepts (\texttt{PersonFunction}, \texttt{CareerStation}).

\begin{table}\scriptsize
\begin{tabularx}{\columnwidth}{rXr}
\toprule
\textbf{Prefix} & \multicolumn{2}{l}{\textbf{Namespace}} \\
\midrule
\texttt{dbo:} & \multicolumn{2}{l}{http://dbpedia.org/ontology/} \\
\texttt{dbp:} & \multicolumn{2}{l}{http://dbpedia.org/property/} \\
\texttt{dcterms:} & \multicolumn{2}{l}{http://purl.org/dc/terms/} \\
\texttt{rdfs:} & \multicolumn{2}{l}{http://www.w3.org/2000/01/rdf-schema\#} \\
\texttt{geo:} & \multicolumn{2}{l}{http://www.w3.org/2003/01/geo/wgs84\_pos\#} \\
\texttt{foaf:} & \multicolumn{2}{l}{http://xmlns.com/foaf/0.1/} \\
\texttt{prov:} & \multicolumn{2}{l}{http://www.w3.org/ns/prov\#} \\
\midrule
\textbf{Size} & \textbf{Properties} & \textbf{Score} \\
\midrule
4 & [foaf:name, geo:long, dbo:location, dbp:hasPhotoCollection] & 0.99905 \\
4 & [foaf:name, geo:long, dbp:hasPhotoCollection, foaf:homepage] & 0.99905 \\
4 & [foaf:name, geo:long, dbo:elevation, dbp:hasPhotoCollection] & 0.99905 \\
4 & [foaf:name, geo:long, dbp:hasPhotoCollection, dbo:runwayLength] & 0.99905 \\
4 & [foaf:name, geo:long, dbo:openingYear, dbp:hasPhotoCollection] & 0.99905 \\
4 & [dbo:height, foaf:name, geo:long, dbp:hasPhotoCollection] & 0.99905 \\
4 & [dbo:river, foaf:name, geo:long, dbp:hasPhotoCollection] & 0.99905 \\
4 & [dbo:buildingStartYear, foaf:name, geo:long, dbp:hasPhotoCollection] & 0.99905 \\
4 & [foaf:name, geo:long, dbp:hasPhotoCollection, dbo:part] & 0.99905 \\
4 & [foaf:name, geo:long, dbp:hasPhotoCollection, dbo:primaryFuelType] & 0.99905 \\
\midrule
1 & [dbo:wikiPageID] & 0.99995 \\
1 & [rdfs:label] & 0.99995 \\
1 & [prov:wasDerivedFrom] & 0.99995 \\
1 & [dbp:hasPhotoCollection] & 0.99995 \\
1 & [foaf:isPrimaryTopicOf] & 0.99995 \\
1 & [dbo:wikiPageRevisionID] & 0.99995 \\
110 & [geo:lat, dbp:postalCode, dbp:imageFlag, dbp:northeast, \dots, dbp:arname] & 0.99997 \\
2 & [foaf:name, rdfs:comment] & 0.99958 \\
2 & [geo:long, rdfs:comment] & 0.99973 \\
2 & [geo:lat, rdfs:comment] & 0.99968 \\
2 & [rdfs:comment, dbp:name] & 0.99955 \\
2 & [dbo:wikiPageWikiLink, rdfs:comment] & 0.99914 \\
2 & [rdfs:comment, dbp:wikiPageUsesTemplate] & 0.99911 \\
2 & [dbo:isPartOf, rdfs:comment] & 0.99901 \\
2 & [dbo:wikiPageLength, rdfs:comment] & 0.99973 \\
2 & [rdfs:comment, dbo:wikiPageExternalLink] & 0.99909 \\
2 & [rdfs:comment, dcterms:subject] & 0.99902 \\
2 & [dbp:longd, rdfs:comment] & 0.99904 \\
2 & [rdfs:comment, dbp:latd] & 0.99900 \\
2 & [rdfs:comment, dbo:wikiPageOutDegree] & 0.99900 \\
\bottomrule
\end{tabularx}
\end{table}

The generation of new evaluation datasets was preferred over the use of existing datasets due to the following reasons:
\begin{enumerate}
\item Datasets from the current state-of-the-art approaches contain a maximum of $1.6M$ triples, while ours scale up to $17.1M$ triples.
\item Some of the existing datasets were not formatted properly.
\item To the best of our knowledge, no key discovery benchmark has been created to date.
\end{enumerate}

The lack of a manually-annotated gold standard for key discovery did not only affect the choice of the datasets.
This led us to adopt the number of retrieved keys and the precision to measure the key extraction quality.
In fact, while calculating the precision of a key discovery algorithm by annotating the retrieved keys is a feasible task, the set of all minimal keys needs to be known in order to compute the recall.

We compared ROCKER against two state-of-the-art approaches dubbed Linkkey~\cite{atencia2014b} and SAKey~\cite{symeonidou2014}.
While Linkkey is a tool able to retrieve keys, SAKey is more scalable and able to retrieve also $k$-almost keys (see Section~\ref{sec:score}).

ROCKER was implemented in Java as part of the link discovery framework LIMES.\footnote{\url{http://limes.sf.net}}
The datasets and the algorithm source code are also available online.\footnote{\url{http://github.com/AKSW/rocker/}}
We launched ROCKER with two different settings; the former aims at finding minimal keys ($\alpha=1$), while the latter aims at finding minimal almost-keys ($\alpha<1$). Both settings were set to use the \textit{fast search} option with $\tau=0.001$.
For the sake of simplicity, we assigned the same value to $\alpha$ (0.999) for all datasets when retrieving almost-keys.
All experiments were carried out on a 64-bit Ubuntu Linux machine with 16 GB of RAM and an octa-core 2.5 GHz CPU.

\subsection{Results}

Table~\ref{tab:results} presents the results we obtained on the twelve datasets.
Runtimes in milliseconds are reported for both tasks, i.e. ``find minimal keys'' and ``find minimal almost-keys''.
For each dataset, the size in number of triples is also shown.
As seen in table, all the computation runtimes for the artificial datasets lie within the same magnitude order of 1,000 milliseconds.
Both ROCKER runs were slower than the other approaches, however this trend has been disproved by the following results.
On the medium-sized datasets \texttt{PersonFunction}, \texttt{CareerStation} and \texttt{OrganisationMember}, our approach is the only one which completed all three tasks.
In particular, Linkkey reached the Java heap space on the first two, while SAKey did not complete on the third one.
On the seven remaining datasets whose size in NTriples format is larger than 1.5 GB, only our approach was able to finish the computation.
This fact leads to consider ROCKER as the most scalable approach for key discovery at the state of the art.

\begin{table*}[ht]\scriptsize
\centering
{\caption{Runtime results in milliseconds for ROCKER, Linkkey and SAKey on all datasets.}\label{tab:results}}
\begin{tabularx}{0.9\textwidth}{Xrrrrr}
\toprule
\textbf{Dataset} & \textbf{Triples} & \textbf{ROCKER(1.0)} & \textbf{ROCKER(0.999)} & \textbf{Linkkey} & \textbf{SAKey} \\
\midrule
OAEI 2011 Restaurant 1 & $1.1K$ & 1,880 & 2,170 & 1,698 & 1,028 \\
OAEI 2011 Restaurant 2 & $7.5K$ & 2,424 & 2,833 & 2,278 & 885 \\
DBpedia PersonFunction & $383K$ & 14,565 & 11,626 & OutOfMemory & 6,221 \\
DBpedia CareerStation & $3.0M$ & 79,964 & 118,632 & OutOfMemory & 2,199,854 \\
DBpedia OrganisationMember & $3.9M$ & 1,075,679 & 1,130,640 & 227,336 & OutOfMemory \\
DBpedia Album & $11.4M$ & 1,948,767 & 366,147 & OutOfMemory & OutOfMemory \\
DBpedia Artist & $12.0M$ & 203,764 & 168,049 & OutOfMemory & OutOfMemory \\
DBpedia Village & $12.9M$ & 4,224,338 & 18,872,456 & OutOfMemory & OutOfMemory \\
DBpedia Animal & $13.7M$ & 8,565,772 & 3,426,372 & OutOfMemory & OutOfMemory \\
DBpedia SoccerPlayer & $13.9M$ & 314,853 & 317,285 & OutOfMemory & OutOfMemory \\
DBpedia ArchitecturalStructure & $13.3M$ & 541,054 & 1,010,347 & OutOfMemory & OutOfMemory \\
DBpedia MusicalWork & $17.1M$ & 2,524,120 & 2,634,869 & OutOfMemory & OutOfMemory \\
\bottomrule
\end{tabularx}
\end{table*}

As can be read in \cite{atencia2014b}, Linkkey was evaluated on datasets smaller than all the DBpedia datasets we generated.
We thus integrated the evaluation carried out by Linkkey's authors by running the tool on our new datasets.
At the same time, the largest dataset SAKey was evaluated on is comparable with our medium-sized datasets~\cite{symeonidou2014}.
Results shown in Table~\ref{tab:results} are thus compatible with the evaluations performed by the respective state-of-the-art algorithms.

The node reduction ratio (RR) is shown in Table~\ref{tab:rratio}.
RR expresses the rate of the number of nodes that were discarded by pruning subtrees, thus avoiding to compute their scores.
The number of properties (i.e., the size of $\mathcal{P}$) and the number of visited nodes are also reported.

\begin{table*}[ht]\scriptsize
\centering
{\caption{Reduction ratios for the two settings of ROCKER on all datasets.}\label{tab:rratio}}
\begin{tabularx}{0.9\textwidth}{Xrrrrr}
\toprule
\textbf{Dataset} & \textbf{\# properties} & \textbf{vnodes(1.0)} & \textbf{vnodes(0.999)} & \textbf{RR(1.0)} & \textbf{RR(0.999)} \\
\midrule
OAEI 2011 Restaurant 1 & 4 & 6 & 6 & 60.00\% & 60.00\% \\
OAEI 2011 Restaurant 2 & 4 & 6 & 6 & 60.00\% & 60.00\% \\
DBpedia PersonFunction & 2 & 3 & 3 & 0.00\% & 0.00\% \\
DBpedia CareerStation & 3 & 4 & 4 & 42.86\% & 42.86\% \\
DBpedia OrganisationMember & 20 & 378 & 378 & 99.96\% & 99.96\% \\
DBpedia Album & 103 & 753 & 753 & $\sim$100.00\% & $\sim$100.00\% \\
DBpedia Artist & 205 & 928 & 928 & $\sim$100.00\% & $\sim$100.00\% \\
DBpedia Village & 116 & 1387 & 1700 & $\sim$100.00\% & $\sim$100.00\% \\
DBpedia Animal & 131 & 1188 & 1188 & $\sim$100.00\% & $\sim$100.00\% \\
DBpedia SoccerPlayer & 88 & 528 & 528 & $\sim$100.00\% & $\sim$100.00\% \\
DBpedia ArchitecturalStructure & 698 & 1622 & 3693 & $\sim$100.00\% & $\sim$100.00\% \\
DBpedia MusicalWork & 136 & 1201 & 1201 & $\sim$100.00\% & $\sim$100.00\% \\
\bottomrule
\end{tabularx}
\end{table*}

Table~\ref{tab:keyquality} reports the key extraction quality results.
For each dataset we show the number of outcomes and the percentage of keys and minimal keys among them (i.e., precision).
Many datasets have been omitted as no keys were found by any approach, or simply because the approach failed during the discovery (cf. Table~\ref{tab:results}).
The most interesting results appear on the two OAEI datasets, where Linkkey was not able to recognise any key.
On the other hand, SAKey was able to recognise all 3 minimal keys on \texttt{Restaurant 2}, yet it returned also 4 non-keys.
SAKey was also able to find 2 out of 3 minimal keys, 3 non-minimal keys and 3 non-keys on \texttt{Restaurant 1}.
Among the other datasets, 3 keys were found on \texttt{Village} by ROCKER only.

Namespace prefixes and the sets of almost-keys found for \texttt{DBpedia ArchitecturalStructure} resp. \texttt{DBpedia Village}, using a threshold for the discriminability score $\alpha=0.999$ are shown below.
Reported are 10 almost-keys that were found on \texttt{DBpedia ArchitecturalStructure} and the first 20 almost-keys that were found on \texttt{DBpedia Village}.
As can be seen, the algorithm found six atomic almost-keys.
After these had been removed from the maximal-priority queue, the refinement operator followed a path of 109 refinements through the same branch of the refinement tree.
Its climb ended on a node having a discriminability score greater than $\alpha$, as well as a size of 110 properties.
After removing this node and its descendants from the queue, the refinement resumed from the bottom of the graph, where ROCKER found 13 more almost-keys composed by 2 properties each.
As our algorithm found 84 almost-keys for \texttt{DBpedia Village}, the big size of most of the almost-keys may be the reason for the longest computation.

\section{Discussion}

As presented in the previous section, ROCKER improves the state of the art w.r.t. correctness and memory consumption.
Other approaches Linkkey and SAkey have shown to require much more memory than ours, as they could not return any result on bigger datasets.
In particular, the heap space of 16 GB was reached on 8 and 9 DBpedia datasets, respectively.
Unlike the other approaches, ROCKER managed to remain below the heap space by storing the hash index on disk.
In fact, in-memory-based algorithms Linkkey and SAKey were not able to handle indexes for datasets having more than 10 million triples.

\begin{figure}[h]
	\includegraphics[width=0.46\textwidth]{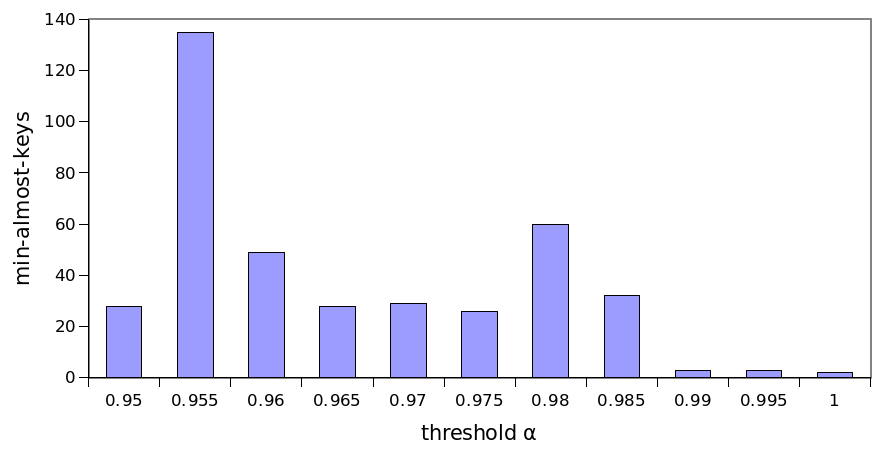}
	\vspace{-4mm}
	\caption{Number of minimal almost-keys found in function of threshold $\alpha$ for ROCKER on dataset \texttt{DBpedia Monument}.}
	\label{fig:monmakeys}
\end{figure}

Runtime results showed that SAKey is the fastest approach on small datasets, being 1.5 to 3 times faster than the others.
This could be explained by the fact that the index creation task is quicker for in-memory-based algorithms.
Moreover, the outcome analysis presented in Table~\ref{tab:keyquality} confirmed that Linkkey and SAKey found candidates that obey their respective key definitions. As mentioned before, the key definition introduced in this work is more correct.
A stricter definition leads ROCKER to a farther exploration of the knowledge graph, whereas the other approaches stop.
Thus, the runtime is affected.
Nevertheless, as can be seen, the runtime is compensated by a substantial improvement in the quality of the results.

In order to analyse how the key discovery task varies w.r.t. the threshold $\alpha$, we ran ROCKER on one chosen dataset \texttt{DBpedia Monument}.
Figure~\ref{fig:monmakeys} shows the number of minimal almost-keys found for values of $\alpha$ within the interval $[0.95, 1]$ with a step of $0.005$.
As can be seen, values are not in scale, i.e. a minimal almost-key for $\alpha_0$ does not necessarily belong to the set of minimal almost-keys for $\alpha_1 < \alpha_0$.
This is because threshold $\alpha$ can ``block'' the computation before the following refinement.
For instance, the highest value was reported for $\alpha=0.955$, where 136 minimal almost-keys were found.
Most of these keys are formed by a common root of two properties, which we call $p_1$ and $p_2$, in the form $\{p_1, p_2, p_i\}$ with $i=3,\dots,96$.
Since the discriminability score of $\{p_1, p_2\}$ is $0.953$, it is not considered as minimal almost-key for $\alpha=0.955$.
However for $\alpha=0.95$, $\{p_1, p_2\}$ will be a minimal almost-key and its descendants will not be visited, thus reducing the number of almost-keys and the runtime (see Table~\ref{fig:monruntime}).

\begin{figure}[h]
	\includegraphics[width=0.46\textwidth]{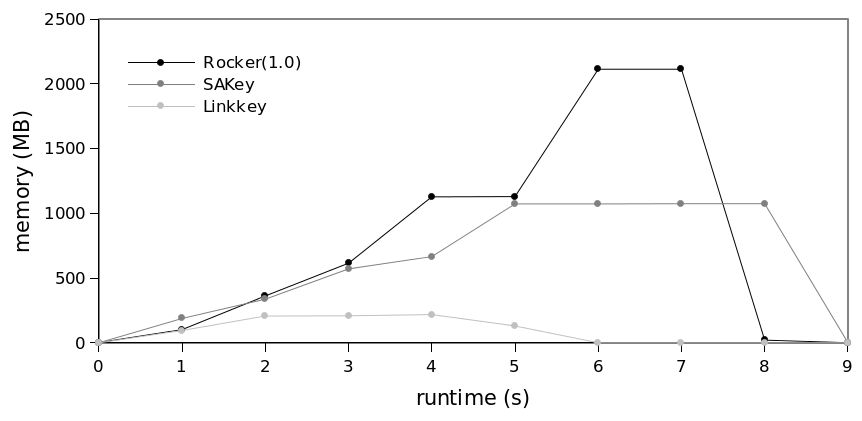}
	\caption{Linkkey showed the best runtime and RAM consumption performances on \texttt{DBpedia Monument}, confirming the results in Table~\ref{tab:results}.}
	\label{fig:monmemorywrtruntime}
\end{figure}

Table~\ref{fig:monmemory} shows the memory consumption w.r.t. $\alpha$.
For $\alpha \geq 0.99$, ROCKER required less memory ($\sim 2$ GB) than on the other experiments ($\sim 5.2$ GB), because all the almost-keys were found before visiting the remaining refinement tree.
The fact that no other almost-key exists is ensured by evaluating the score for the top element of the refinement tree, which contains all the remaining properties.
Having this a score less than $\alpha$, ROCKER ends the computation.

\begin{figure}[htb]
	\subfigure[]{
	\includegraphics[width=0.46\textwidth]{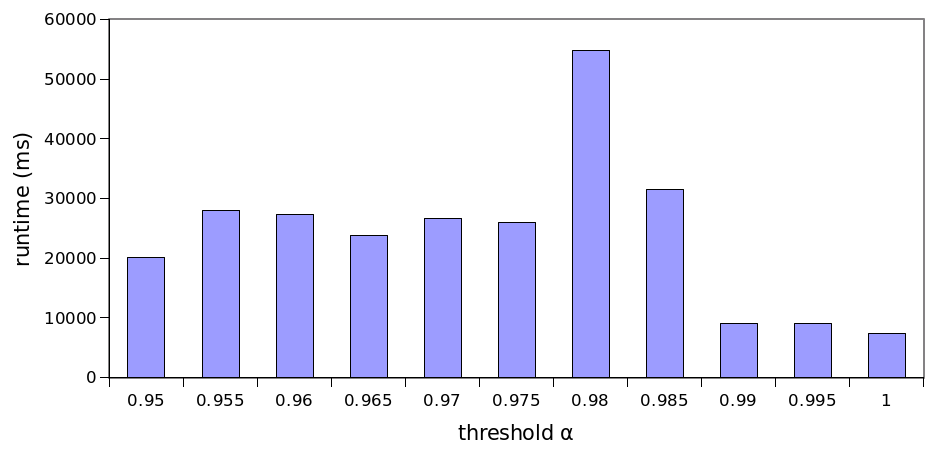}
	\label{fig:monruntime}
	}
	~
	\subfigure[]{
	\includegraphics[width=0.46\textwidth]{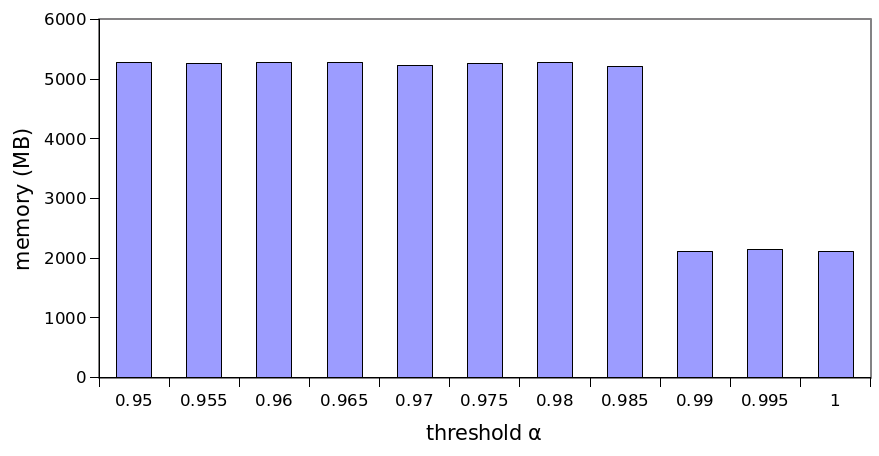}
	\label{fig:monmemory}
	}
    \caption{Run times (\ref{fig:monruntime}) and Random Access Memory consumption (\ref{fig:monmemory}) in function of threshold $\alpha$ for ROCKER on dataset \texttt{DBpedia Monument}.}
    \label{fig:monument}
\end{figure}

\section{Conclusion and Future Work}
In this paper, we presented the first refinement operator for key discovery.
We showed that the operator is non-redundant, non-complete and finite.
We implemented the operator within the ROCKER approach and showed how it can be extended to scale even on large knowledge bases.
Our evaluation of ROCKER suggests that it goes beyond the state of the art with respect to its correctness and memory efficiency, while achieving comparable runtimes.
Future directions include a study of the run times, number of keys and visited nodes w.r.t. the input threshold.
Then, we will investigate on optimization by using in-memory storage for the hash tables, in order to decrease the query runtimes.
Moreover, we will fully integrate the key discovery algorithm in LIMES and make it available in the next releases.
We will then experiment with combining key discovery with the detection of property alignments and use those alignments within the context of link discovery.

    \begin{wrapfigure}{r}{0.0\textwidth}
        \includegraphics[width=0.1\textwidth]{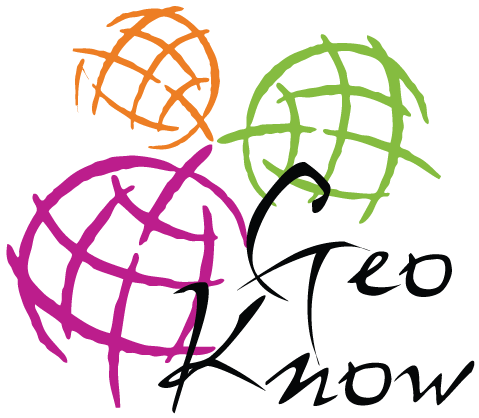}
    \end{wrapfigure}

\section{Acknowledgments}
This research is part of the GeoKnow project, funded by the European Commission with the 7th Framework Programme (Grant Agreement No. 318159).

\bibliographystyle{abbrv}
\vspace{1cm}
{\small{\bibliography{theone}}}

\balancecolumns

\end{document}